\DeclareRobustCommand{\DLLogo}{%
  \begingroup\normalfont
  \kern-1.75pt\includegraphics[align=c,height=1.25\baselineskip]{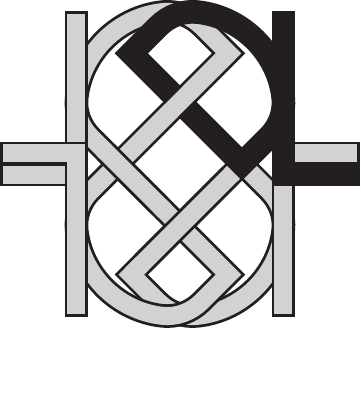}\kern-1.5pt%
  \endgroup
}
\newtheorem{theorem}{Theorem}
\newtheorem{definition}{Definition}
\newcommand{\ttodo}[4]{\ifthenelse{\equal{#1}{inline}}{\todo[author=#2,color=#3]{#4}}{\todo[color=#3]{#2: #4}}}
\newcommand{\sideConColor}[1]{\textcolor{black!40!white}{#1}}
\def\define#1#2#3%
\renewcommand*{\do}[1]{%
 \expandafter\newcommand\csname
 #1\endcsname{#2}
}
\newcommand{\ie}{i.e.\ }
\newcommand{\eg}{e.g.\ }
\newcommand{\NC}{\ensuremath{\textsf{N}_\textsf{C}}\xspace}
\newcommand{\NR}{\ensuremath{\textsf{N}_\textsf{R}}\xspace}
\newcommand{\ELHb}{\ensuremath{\mathcal{ELH}_\bot}\xspace}
\newcommand{\ELpb}{\ensuremath{\mathcal{EL}^+_\bot}\xspace}
\newcommand{\ELpp}{\ensuremath{\mathcal{EL}^{++}}\xspace}
\newcommand{\Protege}{Prot{\'e}g{\'e}\xspace}
\newcommand{\opfont}[1]{\text{\sf{#1}}} %
\newcommand{\ltuple}{\langle}
\newcommand{\rtuple}{\rangle}
\newcommand{\tuple}[1]{\ltuple{#1}\rtuple}
\newcommand{\cut}[1]{\opfont{cut}(#1)}
\newcommand{\dcw}[1]{\opfont{cw}(#1)}
\newcommand{\blue}[1]{\color{blue}#1\color{black}\xspace}
\newcommand{\mypar}[1]{\smallskip\noindent\textbf{#1.}\ }
\renewcommand{\blue}[1]{#1\xspace}
\begin{document}

\copyrightyear{2025}
\copyrightclause{Copyright for this paper by its authors.
  Use permitted under Creative Commons License Attribution 4.0
  International (CC BY 4.0).}

\conference{\DLLogo{} DL 2025: 38th International Workshop on Description Logics, September 3--6, 2025, Opole, Poland}

\title{The Shape of EL Proofs: A Tale of Three Calculi (Extended Version)}[The Shape of \EL Proofs: A Tale of Three Calculi (Extended Version)]

\author[]{Christian Alrabbaa}[%
orcid=0000-0002-2925-1765,
email=christian.alrabbaa@tu-dresden.de,
]
\author[]{Stefan Borgwardt}[%
orcid=0000-0003-0924-8478,
email=stefan.borgwardt@tu-dresden.de,
]
\author[]{Philipp Herrmann}[%
email=philipp.herrmann1@tu-dresden.de,
]
\author[]{Markus Kr{\"o}tzsch}[%
orcid=0000-0002-9172-2601,
email=markus.kroetzsch@tu-dresden.de
]
\address[]{Institute of Theoretical Computer Science, Technische Universität Dresden, 01062 Dresden, Germany}

\begin{abstract}
	Consequence-based reasoning can be used to construct proofs that explain entailments of description logic (DL) ontologies.
	In the literature, one can find multiple consequence-based calculi for reasoning in the \EL family of DLs, each of which gives rise to proofs of different shapes.
	Here, we study three such calculi and the proofs they produce on a benchmark based on the OWL Reasoner Evaluation.
	The calculi are implemented using a translation into existential rules with stratified negation, which had already been demonstrated to be effective for the calculus of the \Elk reasoner.
	We then use the rule engine \Nemo to evaluate the rules and obtain traces of the rule execution.
	After translating these traces back into DL proofs, we compare them on several metrics that reflect different aspects of their complexity.
\end{abstract}

\begin{keywords}
  Explanations\sep
  Proofs \sep
  Rule Engine\sep
  Nemo
\end{keywords}

\maketitle

\section{Introduction}
\label{sec:introduction}

Consequence-based reasoning for DLs has a long tradition, starting from the first reasoning algorithms for $\mathcal{EL}$ with general concept 
inclusions~\cite{DBLP:conf/ecai/Brandt04,DBLP:conf/ijcai/BaaderBL05}, all the way up to expressive DLs like 
$\mathcal{ALCHOIQ}$~\cite{DBLP:conf/ijcai/CucalaGH18,DBLP:conf/birthday/CucalaGH19}.
Due to its favorable computational properties, it is employed in several reasoners, such as \Elk, \Konclude, and \Sequoia~\cite{DBLP:journals/jar/KazakovKS14,DBLP:journals/ws/SteigmillerLG14,DBLP:journals/jair/BateMGCSH18}.
Another advantage of consequence-based approaches is the possibility to extract proofs consisting of step-wise derivations, in order to explain consequences of the ontology to an ontology engineer.
In principle, the relatively simple rules of a reasoning calculus can immediately be used to construct proofs.
However, in practice, this would have to be implemented by each consequence-based reasoner, and is so far only supported by \Elk \cite{DBLP:conf/semweb/KazakovK14}.
Several approaches have been developed to extract proofs from reasoners in a black-box fashion~\cite{DBLP:conf/semweb/HorridgePS10,DBLP:conf/jelia/Schlobach04,DBLP:conf/lpar/AlrabbaaBBKK20}, which do not use a fixed set of rules and often lead to more complicated proof steps, but smaller proofs overall~\cite{DBLP:conf/kr/AlrabbaaBFHKKKK24}.
Our main research question is how proofs resulting from consequence-based calculi differ based on the shape of the rules.
\blue{Specifically, we explore which calculi may be more appropriate in certain scenarios---for instance, producing proofs suitable for specific types of visualizations,
or ones in which information is introduced and resolved locally, resulting in a reasoning structure that is easier to follow.}
We investigate this on three calculi for the \EL family of description logics~\cite{DBLP:conf/ijcai/BaaderBL05,DBLP:journals/jar/KazakovKS14,BHLS-17}.

We follow an approach to encode DL reasoning into (an extension of) Datalog rules~\cite{DBLP:conf/ijcai/Krotzsch11,DBLP:journals/ki/CarralDK20,DBLP:conf/kr/Ivliev0MSK24}.
To execute these rules, we use \Nemo, a Datalog-based rule engine that is easy to use and offers many advanced features, such as datatypes, existential rules, aggregates, and stratified negation~\cite{DBLP:conf/kr/Ivliev0MSK24}.
This has already been demonstrated to be effective for the reasoning calculus of \Elk, including pre-processing of the ontology in OWL format,\footnote{https://www.w3.org/TR/owl2-overview/} which essentially implements an \ELpb reasoner by existential rules with stratified negation~\cite{DBLP:conf/kr/Ivliev0MSK24}.
The advantage of this approach is that one can relatively quickly implement new reasoning procedures with low effort.

In this paper, we also consider two other calculi for \EL-based logics from the literature: the original \ELpp calculus we denote by \Envelope~\cite{DBLP:conf/ijcai/BaaderBL05} (restricted to \ELpb, \ie without nominals and concrete domains) and the \EL calculus \Textbook~\cite{BHLS-17} (extended to \ELHb, \ie with role hierarchies and $\bot$).
The modifications to the calculi allow us to compare them on a dataset of $\ELHb$ reasoning tasks extracted from the 2015 OWL Reasoner Evaluation (ORE)~\cite{DBLP:journals/jar/ParsiaMGGS17}.
After encoding them into rules as for the \Elk calculus, we use the tracing capabilities of \Nemo to compute proofs for the derived consequences.
However, since the \Nemo traces are based on the translated rules and the calculi may use statements that are not expressible as DL axioms, \eg in side conditions, we still need to transform these traces before we obtain proofs over DL statements that can be inspected by ontology engineers.
Finally, we compare the shape of the resulting proofs using several measures, such as the
\emph{size}, \emph{depth}, \emph{directed cutwidth}~\cite{DBLP:journals/jcss/BodlaenderFT09}, and cognitive 
complexity of inference steps~\cite{DBLP:journals/kbs/HorridgeBPS13}.

\section{Calculi and Proofs}
\label{sec:preliminaries}

Let \NC and \NR be two disjoint, countably infinite sets of \emph{concept-} and \emph{role names}, respectively.
\emph{\ELpb concepts} are defined by the grammar $C,D::=\top\mid\bot\mid A\mid C\sqcap D\mid \exists r.C$, where $A\in\NC$ and $r\in\NR$.
\emph{\ELpb axioms} are either \emph{concept inclusions} of the form $C\sqsubseteq D$ or \emph{complex role inclusions} of the form $r_1\circ \dots\circ r_n\sqsubseteq r$, where $r_1,\dots,r_n,r\in\NR$.
An \emph{\ELpb TBox} is a finite set of \ELpb axioms.
\ELHb is the fragment of \ELpb that only allows \emph{simple role inclusions} of the form $r\sqsubseteq s$.
We assume the reader to be familiar with the semantics of these logics, in particular the definition of \emph{entailment} of an axiom~$\alpha$ from a TBox~\Tmc, written $\Tmc\models\alpha$~\cite{BHLS-17}.

\mypar{Calculi}
We consider three inference calculi for fragments of \ELpb that are tailored towards \emph{classification}, \ie computing all entailments of the form $\Tmc\models A\sqsubseteq B$ for $A,B\in\NC$.

\OuterFrameSep0pt

\begin{figure}[tb]
	\vskip-\smallskipamount
	\begin{framed}
  	\centering
		\AXC{$\mathsf{init}(C)$}
		\LeftLabel{$\mathsf{R}_0$}
		\UIC{$C\sqsubseteq C$}
		\DP
		\quad
		\AXC{$\mathsf{init}(C)$}
		\LeftLabel{$\mathsf{R}_\top$}
		\RightLabel{\sideConColor{: $\top$ occurs negatively in \Tmc}}
		\UIC{$C\sqsubseteq\top$}
		\DP
		\quad
		\AXC{$C\sqsubseteq D$}
		\LeftLabel{$\mathsf{R}_\sqsubseteq$}
		\RightLabel{\sideConColor{: $D \sqsubseteq E \in \Tmc$}}
		\UIC{$C\sqsubseteq E$}
		\DP
		\\[3ex]
		\AXC{$C\sqsubseteq D_1 \sqcap D_2$}
		\LeftLabel{$\mathsf{R}_{\sqcap}^-$}
		\UIC{$C\sqsubseteq D_1 \quad C\sqsubseteq D_2$}
		\DP
		\quad
		\AXC{$C\sqsubseteq D_1$} 
		\AXC{$C\sqsubseteq D_2$}
		\LeftLabel{$\mathsf{R}_\sqcap^+$}
		\RightLabel{\sideConColor{: $D_1\sqcap D_2$ occurs negatively in \Tmc}}
		\BIC{$C\sqsubseteq D_1 \sqcap D_2$}
		\DP
		\\[3ex]
		\AXC{$C\sqsubseteq\exists r.E$}
		\LeftLabel{$\mathsf{R}_\exists^-$}
		\UIC{$C\xrightarrow{r} E$}
		\DP
		\quad
		\AXC{$C\xrightarrow{r} D$}
		\AXC{$D \sqsubseteq E$}
		\LeftLabel{$\mathsf{R}_\exists^+$}
		\RightLabel{\sideConColor{: \parbox{4cm}{$r \sqsubseteq^*_\Tmc s$ \\ $\exists s.E$ occurs negatively in \Tmc}}}
		\BIC{$C \sqsubseteq \exists s.E$}
		\DP
		\\[3ex]
		\AXC{$C\xrightarrow{r} E$}
		\LeftLabel{$\mathsf{R}_\rightsquigarrow$}
		\UIC{$\mathsf{init}(E)$}
		\DP
		\quad
		\AXC{$C\xrightarrow{r} E$}
		\AXC{$E \sqsubseteq \bot$}
		\LeftLabel{$\mathsf{R}_\bot$}
		\BIC{$C \sqsubseteq \bot$}
		\DP
		\quad
		\AXC{$C\xrightarrow{r_1} D$}
		\AXC{$D\xrightarrow{r_2} E$}
		\LeftLabel{$\mathsf{R}_\circ$}
		\RightLabel{\sideConColor{: \parbox{2cm}{$r_1 \sqsubseteq^*_\Tmc s_1$ \\ $r_2 \sqsubseteq^*_\Tmc s_2$ \\ $s_1 \circ s_2 \sqsubseteq s \in \Tmc$}}}
		\BIC{$C\xrightarrow{s} E$}
		\DP
	\end{framed}
	\vskip-\smallskipamount
	\caption{Optimized \Elk calculus~\cite{DBLP:journals/jar/KazakovKS14}.}
	\label{fig:elk-calculus}
\end{figure}
Figure~\ref{fig:elk-calculus} shows the inference rules used in the reasoner \Elk for \ELpb TBoxes.
Side conditions are marked in gray, where \enquote{$C$ occurs negatively} means that $C$ occurs within a concept on the left-hand side of some axiom in~\Tmc, and $\sqsubseteq_\Tmc^*$ denotes the precomputed \emph{role hierarchy}, \ie the transitive closure over simple role inclusions.
Furthermore, complex role inclusions are assumed to be in the normal form $r_1\circ r_2\sqsubseteq r$, using only binary role composition.
Axioms with longer role compositions can be normalized with the help of fresh role names.
Then, $\Tmc\models A\sqsubseteq B$ holds iff either $A\sqsubseteq B$ or $A\sqsubseteq\bot$ can be derived by these rules from $\mathsf{init}(A)$~\cite{DBLP:journals/jar/KazakovKS14}.

\begin{figure}[tb]
	\vskip-\smallskipamount
	\begin{framed}
  		\centering
		\AXC{\phantom{A}}
		\LeftLabel{$\mathsf{CR1}$}
		\UIC{$C\sqsubseteq C$}
		\DP
		\quad
		\AXC{\phantom{A}}
		\LeftLabel{$\mathsf{CR2}$}
		\UIC{$C\sqsubseteq\top$}
		\DP
		\quad
		\AXC{$C \sqsubseteq D_1$}
		\AXC{$C \sqsubseteq D_2$}
		\AXC{$D_1 \sqcap D_2 \sqsubseteq E$}
		\LeftLabel{$\mathsf{CR4}$}
		\TIC{$C\sqsubseteq E$}
		\DP
		\\[3ex]
		\AXC{$C \sqsubseteq D$}
		\AXC{$D \sqsubseteq E$}
		\LeftLabel{$\mathsf{CR3}$}
		\BIC{$C\sqsubseteq E$}
		\DP
		\quad
		\AXC{$C \sqsubseteq \exists r.D_1$\quad$D_1 \sqsubseteq D_2$\quad$\exists s.D_2 \sqsubseteq E$}
		\LeftLabel{$\mathsf{CR5'}$}
		\RightLabel{\sideConColor{: $r\sqsubseteq_\Tmc^* s$}}
		\UIC{$C\sqsubseteq E$}
		\DP
		\\[3ex]
		\AXC{$C \sqsubseteq \exists r.E$}
		\AXC{$E \sqsubseteq \bot$}
		\LeftLabel{$\mathsf{R}_\bot'$}
		\BIC{$C \sqsubseteq \bot$}
		\DP
	\end{framed}
	\vskip-\smallskipamount
	\caption{\Textbook calculus~\cite{BHLS-17} with a modified $\mathsf{CR5}$ and added variant of $\mathsf{R}_\bot$.}
	\label{fig:textbook-calculus}
\end{figure}
Figure~\ref{fig:textbook-calculus} shows the \Textbook classification rules for \EL from~\cite{BHLS-17} with a slight modification to rule $\mathsf{CR5}$ and the addition of a variant of~$\mathsf{R}_\bot$ from \Elk, in order to support \ELHb.
Here, all input and derived concept inclusions are required to be of one of the forms $A\sqsubseteq B$, $A_1\sqcap A_2\sqsubseteq B$, $A\sqsubseteq\exists r.B$ or $\exists r.A\sqsubseteq B$, where $A,A_1,A_2,B$ are concept names from~\Tmc, $\top$ or~$\bot$, which is again without loss of generality.
This calculus is correct in the same sense as the \Elk calculus above, but, instead of the $\mathsf{init}(A)$ statement, it is initialized with all axioms of the input TBox~\Tmc.

\begin{figure}[tb]
	\vskip-\smallskipamount
	\begin{framed}
		\centering
		\AXC{$C \sqsubseteq D$}
		\LeftLabel{$\mathrm{CR1}$}
		\RightLabel{\sideConColor{: $D \sqsubseteq E\in\Tmc$}}
		\UIC{$C\sqsubseteq E$}
		\DP
		\quad
		\AXC{$C \sqsubseteq D_1$\quad$C \sqsubseteq D_2$}
		\LeftLabel{$\mathrm{CR2}$}
		\RightLabel{\sideConColor{: $D_1 \sqcap D_2 \sqsubseteq E\in\Tmc$}}
		\UIC{$C\sqsubseteq E$}
		\DP
		\\[3ex]
		\AXC{$C \sqsubseteq D$}
		\LeftLabel{$\mathrm{CR3}$}
		\RightLabel{\sideConColor{: $D \sqsubseteq \exists r.E\in\Tmc$}}
		\UIC{$C \sqsubseteq \exists r.E$}
		\DP
		\quad
		\AXC{$C \sqsubseteq \exists r.D_1$\quad$D_1 \sqsubseteq D_2$}
		\LeftLabel{$\mathrm{CR4}$}
		\RightLabel{\sideConColor{: $\exists r.D_2 \sqsubseteq E\in\Tmc$}}
		\UIC{$C\sqsubseteq E$}
		\DP
		\\[3ex]
		\AXC{$C \sqsubseteq \exists r.D$\quad $D \sqsubseteq \bot$}
		\LeftLabel{$\mathrm{CR5}$}
		\UIC{$C \sqsubseteq \bot$}
		\DP
		\quad
		\AXC{$C \sqsubseteq \exists r.E$}
		\LeftLabel{$\mathrm{CR10}$}
		\RightLabel{\sideConColor{: $r \sqsubseteq s\in\Tmc$}}
		\UIC{$C \sqsubseteq \exists s.E$}
		\DP
		\\[3ex]
		\AXC{$C \sqsubseteq \exists r_1.D$\quad$D \sqsubseteq \exists r_2.E$}
		\LeftLabel{$\mathrm{CR11}$}
		\RightLabel{\sideConColor{: $r_1 \circ r_2 \sqsubseteq s\in\Tmc$}}
		\UIC{$C\sqsubseteq \exists s.E$}
		\DP
	\end{framed}
	\vskip-\smallskipamount
	\caption{\Envelope calculus~\cite{DBLP:conf/ijcai/BaaderBL05}, restricted to \ELpb.}
	\label{fig:envelope-calculus}
\end{figure}
Finally, Figure~\ref{fig:envelope-calculus} shows the \Envelope inference rules for \ELpb from~\cite{DBLP:conf/ijcai/BaaderBL05} (rules CR6--CR9 for nominals and concrete domains have been omitted).
For consistency with the other calculi, we have translated the statements \enquote{$D\in S(C)$} and \enquote{$(C,D)\in R(r)$} from the original paper into 
$C\sqsubseteq D$ and $C\sqsubseteq \exists r.D$, respectively.
Note that $C$ and $D$ in these statements must be concept names from~\Tmc, $\top$ or $\bot$.
This calculus also requires the concept and role inclusions in \Tmc to be in normal form, and is correct in the same sense as for the other calculi, but starting only from the tautologies $C\sqsubseteq C$ and $C\sqsubseteq \top$ for all concept names~$C$ from~\Tmc.

\mypar{Proofs}
Following the notion introduced in~\cite{DBLP:conf/lpar/AlrabbaaBBKK20}, a \emph{proof} of $\Tmc\models A\sqsubseteq B$, where $A$ and $B$ are concept names, is a finite, acyclic, 
directed \emph{hypergraph}, where each vertex~$v$ is labeled with an axiom $\ell(v)$.
Hyperedges represent sound inference steps and are of the form $(S, d)$, where~$S$ is a set of vertices and $d$ is a vertex s.t.\ $\{\ell(v)\mid v\in S\} \models \ell(d)$; \blue{note that the 
direction of the edge is from the vertices in $S$ to the vertex $d$}.
The leaf vertices (without incoming hyperedges) of a proof are labeled with axioms from \Tmc, and the unique root (without outgoing hyperedges) is labeled with $A\sqsubseteq B$.
Moreover, to obtain smaller proofs, they are also required to be \emph{non-redundant}, which means that the same axiom cannot have multiple subproofs: no two vertices can have the 
same label, every vertex can have at most one incoming hyperedge, and a vertex has no incoming hyperedge iff it is labeled by an axiom from~\Tmc.

Usually, hyperedges are additionally distinguished by a label, which corresponds to the rule name from a calculus.
However, for simplicity, in this paper, we dispense with hyperedges altogether and simply view proofs as directed acyclic graphs (with edges pointing towards the root) without rule names.
The children of a vertex~$v$ are then the premises of the (unique) inference step that was used to derive $\ell(v)$.
The label of a vertex without predecessors must then be either from~\Tmc or a tautology that can be derived using a rule without premises.
Moreover, since most proof visualizations~\cite{DBLP:conf/dlog/KazakovKS17,EVONNE} are based on tree-shaped proofs, in the following, we only consider the tree-shaped unravelings of these directed acyclic graphs
(see Figure~\ref{fig:example-proofs}).
This means that there can be multiple vertices with the same label, but then they must have isomorphic subproofs (subtrees).

To illustrate how to obtain proofs from the calculi above, we consider the small example TBox $\Tmc=\{A\sqsubseteq B,\ B\sqsubseteq\exists r.C,\ C\sqsubseteq D,\ \exists t.D\sqsubseteq E,\ r\sqsubseteq s,\ s\sqsubseteq t\}$ and the entailment $\Tmc\models A\sqsubseteq E$.
Since proofs are to be inspected by ontology engineers, they are restricted to %
DL axioms.
This means, however, that they cannot include side conditions and statements that are not DL axioms (\eg $\mathsf{init}(C)$).
Thus, we have to adapt the calculus rules before we can use them as inference steps in proofs.
\begin{figure}
  \centering
  \begin{tikzpicture}[grow'=up,edge from parent/.style={draw,latex-},sibling distance=5em,level distance=6ex]
    \node {$A\sqsubseteq E$}
      child {node {$A\sqsubseteq\exists t.D$}
        child {node {$A\sqsubseteq\exists r.C$}
          child {node {$A\sqsubseteq B$}}
          child {node {$B\sqsubseteq \exists r.C$}}
        }
        child {node {$C\sqsubseteq D$}}
        child {node {$r\sqsubseteq t$}
          child {node {$r\sqsubseteq s$}}
          child {node {$s\sqsubseteq t$}}
        }
      }
      child {node {$\exists t.D\sqsubseteq E$}}
    ;
  \end{tikzpicture}
  \begin{tikzpicture}[grow'=up,edge from parent/.style={draw,latex-},sibling distance=5em,level distance=6ex]
    \node {$A\sqsubseteq E$}
      child {node {$A\sqsubseteq\exists r.C$}
        child {node {$A\sqsubseteq B$}}
        child {node {$B\sqsubseteq \exists r.C$}}
      }
      child {node {$C\sqsubseteq D$}}
      child {node {$\exists t.D\sqsubseteq E$}}
      child {node {$r\sqsubseteq t$}
        child {node {$r\sqsubseteq s$}}
        child {node {$s\sqsubseteq t$}}
      }
    ;
  \end{tikzpicture}
  
  \medskip
  \begin{tikzpicture}[grow'=up,edge from parent/.style={draw,latex-},sibling distance=5em,level distance=6ex]
    \node {$A\sqsubseteq E$}
      child {node {$A\sqsubseteq\exists t.C$}
        child {node {$A\sqsubseteq\exists s.C$}
          child {node {$A\sqsubseteq\exists r.C$}
            child {node {$A\sqsubseteq B$}}
            child {node {$B\sqsubseteq \exists r.C$}}
          }
          child {node {$r\sqsubseteq s$}}
        }
        child {node {$s\sqsubseteq t$}}
      }
      child {node {$C\sqsubseteq D$}}
      child {node {$\exists t.D\sqsubseteq E$}}
    ;
  \end{tikzpicture}
  \caption{Example proofs based on \Elk (top left), \Textbook (top right), and \Envelope (bottom)}
  \label{fig:example-proofs}
\end{figure}
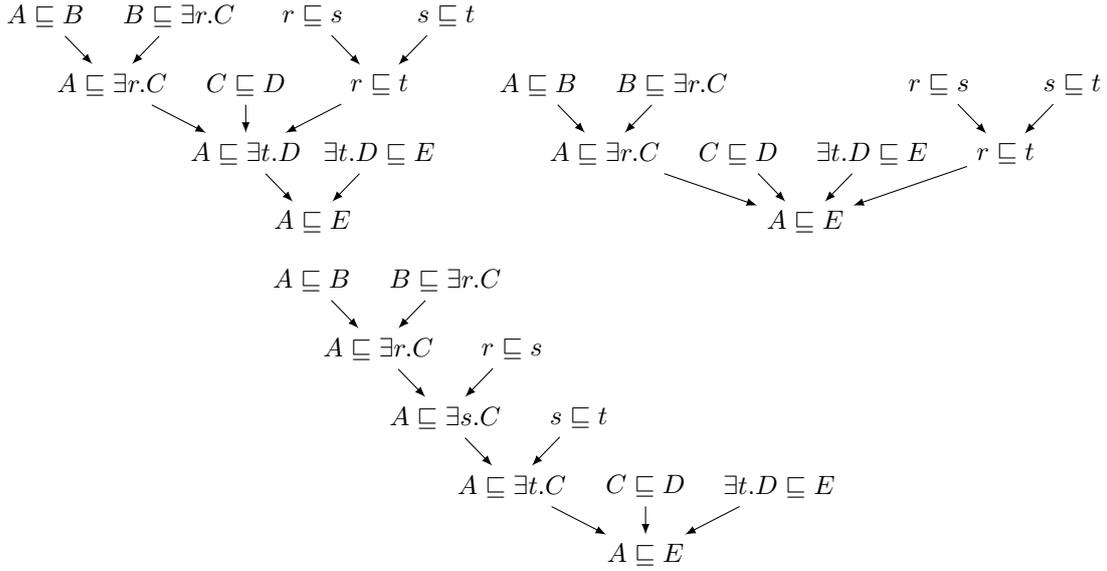

We treat side conditions of the form $C\sqsubseteq D\in\Tmc$ as ordinary premises $C\sqsubseteq D$, since they are DL axioms.
However, this does not mean that $\mathsf{R}_\sqsubseteq$ from \Elk is now equivalent to $\mathsf{CR3}$ from \Textbook, since the second premise in the former is still restricted to axioms from~\Tmc.

Similarly, we treat side conditions $r\sqsubseteq_\Tmc^* s$ as premises $r\sqsubseteq s$.
However, to obtain a valid proof, we also have to derive these axioms from the TBox.
For both the \Elk and the \Textbook calculus, we therefore add the following two rules:
\[
\AXC{\vphantom{$t\sqsubseteq t$}}
\RL{\sideConColor{: $t\in\NR$ occurs negatively in~\Tmc}}
\UIC{$t\sqsubseteq t$}
\DP
\quad
\AXC{$s\sqsubseteq t$}
\RL{\sideConColor{: $r\sqsubseteq s\in\Tmc$}}
\UIC{$r\sqsubseteq t$}
\DP
\]

Next, we replace statements of the form $C\xrightarrow{r}E$ in the \Elk calculus by $C\sqsubseteq\exists r.E$, which preserves soundness of the inference rules, and is similar to the treatment of \enquote{$(C,D)\in R(r)$} from the \Envelope calculus.
Additionally, we simply omit $\mathsf{init}(C)$ statements, which means that $\mathsf{R}_0,\mathsf{R}_1$ from \Elk become similar to $\mathsf{CR1},\mathsf{CR2}$ from \Textbook, but they can only appear in a proof if $\mathsf{init}(C)$ was actually derived by the original rules.
We also add $\mathsf{CR1}$ and $\mathsf{CR2}$ to the \Envelope calculus to express the initialization step, which does not correspond to explicit rules in~\cite{DBLP:conf/ijcai/BaaderBL05}.
For the \Elk and \Envelope calculi, which are not explicitly initialized with the TBox axioms, this results in proofs such as the following:
\[
\AXC{}
\UIC{$A\sqsubseteq A$}
\AXC{$A\sqsubseteq B$}
\BIC{$A\sqsubseteq B$}
\DP
\]
However, this violates non-redundancy since $A\sqsubseteq B$ has multiple non-isomorphic proofs.
In such cases, we keep only a subproof of minimal size (in the example, only the leaf $A\sqsubseteq B$).
This also applies to other inference rules, for example, $\mathsf{R}_{\exists^-}$, which now trivially derives $C\sqsubseteq\exists r.E$ from $C\sqsubseteq\exists r.E$.
Thus, due to the above simplification, this rule will never appear in a proof.

We also omit the side conditions of the form \enquote{$X$ occurs negatively in~\Tmc}.
To represent them in our proofs, we could add the axiom in which $X$ occurs negatively as a premise; however, this axiom is not logically necessary for the inference step, and may be confusing because it has no connection to the inference other than the occurrence of~$X$.
For example, consider the following application of~$\mathsf{R}_\exists^+$, which also requires that $\exists t.D$ occurs negatively in~\Tmc:
\[
\AXC{$A\sqsubseteq\exists r.C$}
\AXC{$C\sqsubseteq D$}
\AXC{$r\sqsubseteq t$}
\TIC{$A\sqsubseteq\exists t.D$}
\DP
\]
We could add $\exists t.D\sqsubseteq E$ as a premise to express the side condition, but this axiom is actually not necessary to derive the conclusion $A\sqsubseteq\exists t.D$.
Moreover, $\exists t.D\sqsubseteq E$ also occurs in the subsequent application of $\mathsf{R}_\sqsubseteq$, and thus it may be additionally confusing if it occurs in two consecutive proof steps, but is only necessary for one of them.

The proofs resulting from all these considerations can be seen in Figure~\ref{fig:example-proofs}, where we draw them as trees to emphasize their shape, which we want to analyze in the following.
We can see that, despite the transformations and simplifications we applied, the three calculi can still result in substantially different proofs, even for such a simple entailment.

\section{Measures}
\label{sec:measures}
We evaluate the shape of proofs by several measures from the literature, which reflect different aspects of how ontology engineers can inspect a proof in different representations (see Figure~\ref{fig:shapes}).

\begin{figure}[tb]
  \centering
  ~
  \hfill
  \begin{tikzpicture}[axiom/.style={draw,rectangle,rounded corners=3pt,minimum width=6em,minimum height=3ex},scale=0.5,transform shape,node distance=2em and 1.5em,baseline=(j.base)]
    \node (a) {};
    \node[below right=of a] (b) {};
    \node[below right=of b] (d) {};
    \node[below=of d] (e) {};
    \node[below right=of e] (i) {};
    \node[below=of i] (j) {};
    \node[below=13em of b] (c) {};
    \node[below right=of c] (f) {};
    \node[below=of f] (g) {};
    \node[below=of g] (h) {};

    \node[axiom,fill=gray, right=-1em of a] {};
    \node[axiom, right=-1em of b] {};
    \node[axiom, right=-1em of c] {};
    \node[axiom, right=-1em of d] {};
    \node[axiom, right=-1em of e] {};
    \node[axiom, right=-1em of f] {};
    \node[axiom, right=-1em of g] {};
    \node[axiom, right=-1em of h] {};
    \node[axiom, right=-1em of i] {};
    \node[axiom, right=-1em of j] {};

    \begin{scope}[-latex,shorten >=3pt,shorten <=3pt,rounded corners=3pt]
    \draw (b) -| (a) ;
    \draw (d) -| (b) ;
    \draw (e) -| (b) ;
    \draw (i) -| (e) ;
    \draw (j) -| (e) ;
    \draw (c) -| (a) ;
    \draw (f) -| (c) ;
    \draw (g) -| (c) ;
    \draw (h) -| (c) ;
    \end{scope}

    \path[every edge/.style={gray,draw,|-|}]
      (-0.6,0.25) edge node[sloped,yshift=-0.9em] {\Large size} (-0.6,-8.8)
      (-0.2,-9.2) edge node[sloped,yshift=-0.9em] {\Large depth} (4.2,-9.2)
    ;
  \end{tikzpicture}
  \hfill
  \begin{tikzpicture}[axiom/.style={draw,rectangle,rounded corners=3pt,minimum width=6em,minimum height=3ex},scale=0.5,transform shape,node distance=2em and 1.5em,baseline=(i.base)]
    \node (a) {};
    \node[above=of a] (b) {};
    \node[above=of b] (d) {};
    \node[above=of d] (e) {};
    \node[above=of e] (i) {};
    \node[above=of i] (j) {};
    \node[above=of j] (c) {};
    \node[above=of c] (f) {};
    \node[above=of f] (g) {};
    \node[above=of g] (h) {};

    \node[axiom,fill=gray, right=-1em of a] {};
    \node[axiom, right=-1em of b] {};
    \node[axiom, right=-1em of c] {};
    \node[axiom, right=-1em of d] {};
    \node[axiom, right=-1em of e] {};
    \node[axiom, right=-1em of f] {};
    \node[axiom, right=-1em of g] {};
    \node[axiom, right=-1em of h] {};
    \node[axiom, right=-1em of i] {};
    \node[axiom, right=-1em of j] {};

    \begin{scope}[]
    \path[-latex,shorten >=3pt,shorten <=3pt,bend right=45]
      (b.west) edge (a.west) 
      (d.west) edge (b.west)
      (e.west) edge (b.west)
      (i.west) edge (e.west)
      (j.west) edge (e.west)
      (c.west) edge (a.west)
      (f.west) edge (c.west)
      (g.west) edge (c.west)
      (h.west) edge (c.west)
    ;
    \end{scope}

    \path
      (-1.8,3.3) edge[draw=gray,|-|] (-0.2,3.3)
      (-1.8,-0.6) edge[draw=none] node[gray,sloped,yshift=-0.9em] {\Large cutwidth} (-0.2,-0.6)
    ;
  \end{tikzpicture}
  \hfill
  \begin{minipage}[c]{0.5\textwidth}
  \begin{tikzpicture}[every node/.style={draw,rectangle,rounded corners=3pt,minimum width=6em,minimum height=3ex},scale=0.5,transform shape,level distance=2.7em,baseline=(i.base)]
    \node[fill=gray] (a) {} [grow'=up,latex-,sibling distance=22em]
      child {node (b) {} [sibling distance=10em]
        child {node (d) {}}
        child {node (e) {} [sibling distance=7em]
          child {node (i) {}}
          child {node (j) {}}
        }
      }
      child {node (c) {} [sibling distance=7em]
        child {node (f) {}}
        child {node (g) {}}
        child {node (h) {}}
      }
    ;

    \path[every node/.style={draw=none},every edge/.style={gray,draw,|-|}]
      (-7.1,3.1) edge node[sloped,yshift=-0.9em] {\Large depth} (-7.1,-0.2)
      (-6.7,-0.6) edge node[sloped,yshift=-0.9em] {\Large justification size} (7.3,-0.6)
    ;
  \end{tikzpicture}
  \end{minipage}
  \hfill
  ~
  \caption{Different visual representations of a proof: nested list (left), linear (middle), proof tree (right).} %
  \label{fig:shapes}
\end{figure}
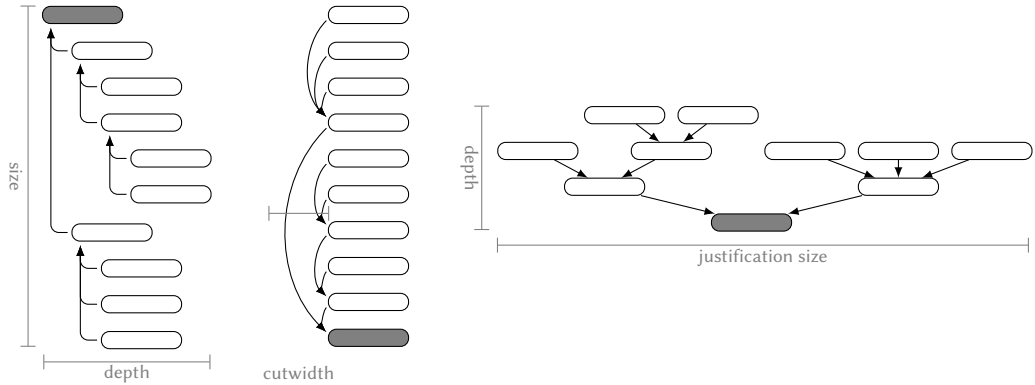

\mypar{Size}
The \emph{size} of a proof~\cite{DBLP:conf/lpar/AlrabbaaBBKK20} is the number of occurrences of axioms in the proof tree.
For example, if ontology engineers need to inspect the whole proof because they are unfamiliar with the ontology, the size of the proof influences how much time they will need to inspect it.
The size is proportional to the height of a linear representation of the proof or a nested list visualization as in a file browser, \eg in the \emph{proof explanation plugin} for~\Protege~\cite{DBLP:conf/dlog/KazakovKS17}.

\mypar{Depth}
The \emph{depth} of a proof~\cite{DBLP:conf/cade/AlrabbaaBBKK21} is the length of the longest path from a leaf to the root.
This is related to the task of finding an error in the ontology based on an erroneous inference.
In this case, ontology engineers would rather inspect the proof only along the \enquote{most suspicious-looking} branch (or a few such branches), and thus the time they may need in the worst case depends on the length of the longest branch.
The depth is also proportional to the height of a more traditional proof-tree representation, as used in \Evonne~\cite{EVONNE}.
The width of such a tree visualization, assuming that adjacent subtrees are \emph{non-overlapping} in the sense that nodes from one subtree are not positioned vertically above nodes from another subtree, can be expressed as a function of the number of leafs in the proof, which is also called its \emph{justification size}~\cite{DBLP:conf/dlog/AlrabbaaBBKK20}.
However, we do not consider the justification size in this paper, since it does not depend on the reasoning calculus.

\mypar{Cutwidth}
The \emph{(directed) cutwidth} of a graph tries to measure how linear it is~\cite{DBLP:journals/jcss/BodlaenderFT09}.
Given a linear vertical arrangement of a proof's nodes~\cite{EVONNE}, the cutwidth is the maximal number of edges that would be affected when cutting the graph horizontally between any two consecutive nodes (see Figure~\ref{fig:shapes}).
The cutwidth of the graph is then the minimum of the cutwidths of all possible such linear arrangements.
Hence, it is proportional to the maximal number of intermediate axioms an ontology engineer needs to keep in memory when reading the proof in such an (optimal) linear representation from top to bottom.

\mypar{Bushiness Score}
We developed another score to measure how \enquote{bushy} (non-linear) a proof is.
It is computed as the ratio between the size of the proof and its depth ($+1$ for the root).
Hence, it can be interpreted as the average number of vertices per level (see Figure~\ref{fig:shapes}).
A completely linear proof in which all inference steps have only one premise results in a bushiness score of~$1$ (not bushy at all), and the full binary tree with five levels gets a score of $\frac{31}{5}=6.2$ (very bushy).

\mypar{Step Complexity}
Finally, we consider a measure based on the cognitive complexity of inference steps, which has been proposed in the context of justifications, and has been evaluated in user studies~\cite{DBLP:journals/kbs/HorridgeBPS13}.
It reflects multiple aspects of the syntactical structure of the argument, such as the depth of involved concepts, how many constructors and axiom types are used, or whether it uses the triviality of a concept name (\ie being equivalent to $\bot$ or $\top$).
Here, we consider the average of the step complexities of the individual inference steps in the proof.

\section{Directed Cutwidth on Trees}\label{sec:cutwidth}

Computing cutwidth is \NP-complete in general \cite{DBLP:journals/jcss/BodlaenderFT09}, and the general-purpose implementations we tried could not compute
the metric for our complete evaluation set in a feasible time.
The problem becomes polynomial on trees \cite{DBLP:journals/jacm/Yannakakis85}, but the algorithm is complicated, and we are not
aware of any implementation. \emph{Directed} cutwidth on trees, however, admits a far simpler algorithm, which we now introduce and prove to be correct,
since we could not find any publication that establishes this result.

Consider a tree $T=\tuple{V,E}$ with vertices $V$ and directed edges $E$ pointing towards the leafs.\footnote{This edge direction is more intuitive for this section, but
our results are readily applied to proof trees with edges oriented the other way (see Figures~\ref{fig:example-proofs} and~\ref{fig:shapes}). Indeed, directed cutwidth is
the same for the dual ordering.}
A \emph{serialization} of $T$ is a word $S\in V^*$ that contains each vertex $v\in V$ at a unique position
$v_S\in\{1,\ldots,|V|\}$ such that $\tuple{v,w}\in E$ implies $v_S<w_S$.
For $i\in\{1,\ldots,|V|-1\}$, the number of edges cut in the $i$th gap between vertices in $S$ is
$\cut{i}=|\{\tuple{v,w}\in E\mid v_S\leq i< w_S\}|$.
The \emph{cutwidth} $\dcw{S}$ of $S$ is $\max_{1\leq i<|V|} \cut{i}$, and the \emph{directed cutwidth} $\dcw{T}$ of $T$ is the minimal cutwidth of any serialization of~$T$.
The out-degree of any vertex in $T$ is a lower bound for its directed cutwidth, though it can be higher (e.g., for a full binary tree $T$, $\dcw{T}$ is depth${}+1$).
We omit \emph{directed} below, since we consider no other kind of cutwidth on trees.

\begin{definition}\label{def_std_serial}
The \emph{standard serialization} $S(T)$ of a tree $T=\tuple{V,E}$ is defined inductively.
If $V=\{v\}$, then $S(T):=v$.
If $|V|>1$, then let $r$ be the root of $T$ and let $C_1,\ldots,C_\ell$ be its direct subtrees, ordered such that 
$i<j$ implies $\dcw{S(C_i)}\leq \dcw{S(C_j)}$; then $S(T):=r S(C_1) \cdots S(C_\ell)$.
\end{definition}

For non-singleton trees $T$, each of the sub-sequences $S(C_i)$ has $\ell-i$ ``overarching'' edges from the root
to the roots of later $C_j$, $j>i$ (see also Figure~\ref{fig:shapes}). %
Therefore, if the root of $T$ has $\ell$ children, $\dcw{S(T)}=\max(\{\ell\}\cup\{\dcw{S(C_i)}+\ell-i\mid 1\leq i\leq \ell\})$.
It is easy to compute $S(T)$ and $\dcw{S(T)}$ in a bottom-up fashion.
To do this in small steps, after computing $S(C_i)$ and $\dcw{S(C_i)}$ for all child trees $C_i$ of a vertex $r$,
we can add the children $C_i$ to $r$ one by one, in an order of non-decreasing cutwidth.
The following lemma is the key to showing that this recursive procedure yields, for each
partial subtree (with only some child trees added yet), the exact cutwidth.
Its proof can be found in the appendix.

\begin{restatable}{lemma}{LemDcwInductionStep}\label{lemma_dcw_induction_step}
For $i\in\{1,2\}$, let $T_i$ be a tree with root $r_i$ and cutwidth $w_i=\dcw{T_i}=\dcw{S_i}$ for an optimal serialization
$S_i$.
Assume that $C_1,\ldots,C_\ell$ are the direct child trees below $r_1$ in an order of non-decreasing cutwidth,
and that $w_2\geq \dcw{C_i}$ for all $1\leq i\leq \ell$.
Then the tree $T$ obtained from $T_1$ by adding $T_2$ as an additional direct child tree below $r_1$ has
cutwidth $\dcw{T}=\max(w_1+1, w_2)$, and an optimal serialization is $S_1 S_2$.
\end{restatable}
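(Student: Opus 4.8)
The plan is to prove the two inequalities $\dcw{T}\le\max(w_1+1,w_2)$ and $\dcw{T}\ge\max(w_1+1,w_2)$ separately, the witness for optimality coming out of the first one. For the upper bound I would first note that $S_1S_2$ is a valid serialization of $T$, since the edge constraints of $T_1$ and of $T_2$ are each respected inside $S_1$, resp.\ $S_2$, and the one new edge $r_1\to r_2$ goes from $S_1$ to $S_2$. Then I would bound $\cut{i}$ for the three kinds of gaps: inside $S_1$ at most $\dcw{S_1}=w_1$ edges of $T_1$ are cut plus the edge $r_1\to r_2$ (which overarches all of $S_1$ because $r_1$ is first), so $\le w_1+1$; at the gap between $S_1$ and $S_2$ only $r_1\to r_2$ is cut; and inside $S_2$ at most $\dcw{S_2}=w_2$ edges of $T_2$ are cut, since $r_2$ precedes all of $S_2$ and hence $r_1\to r_2$ is not cut there. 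This gives $\dcw{S_1S_2}\le\max(w_1+1,w_2)$.

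For the lower bound I would first record two elementary facts: (i) the out-degree of any vertex is a lower bound for the cutwidth; (ii) restricting any serialization of a tree to the vertex set of a subtree yields a serialization of that subtree of no larger cutwidth (an edge of the subtree cut in the restriction is cut at a corresponding gap of the original serialization). Fact (ii) immediately gives $\dcw{T}\ge\dcw{T_2}=w_2$ and $\dcw{T}\ge\dcw{C_i}$ for all $i$. The interesting half is $\dcw{T}\ge w_1+1$, and for this I would isolate a \textbf{peeling claim}: if the root of a tree $U$ has $k$ child subtrees each of cutwidth $\ge c$ for some $c\ge1$, then $\dcw{U}\ge c+k-1$. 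This is proved by induction on $k$: take an optimal serialization $S$ of $U$ and, from the $k$ child subtrees, remove the one, $D$, whose last vertex occurs latest in $S$; by induction the remaining tree has cutwidth $\ge c+k-2$, realized at some gap of the restricted serialization, which in $S$ sits just before a vertex $x$ lying in one of the other child subtrees; since $D$'s last vertex is at least as late as $x$, the subtree $D$ either has its root after that gap (so the root-edge into $D$ is cut there) or straddles that gap (so an edge of $D$ is cut), contributing the missing unit.

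To finish, I would combine the peeling claim with (i) and with the serialization formed by $r_1$ followed by optimal serializations of $C_1,\dots,C_\ell$ to obtain $w_1=\dcw{T_1}=\max(\{\ell\}\cup\{\dcw{C_i}+\ell-i\mid 1\le i\le\ell\})$, and then argue by cases. If $w_1=\ell$, then $r_1$ has $\ell+1$ children in $T$, so $\dcw{T}\ge\ell+1=w_1+1$ by (i). If $w_1>\ell$, choose $i^*$ with $w_1=\dcw{C_{i^*}}+\ell-i^*$; then $\dcw{C_{i^*}}>i^*\ge1$, and $C_{i^*},C_{i^*+1},\dots,C_\ell,T_2$ are $\ell-i^*+2$ child subtrees of $r_1$ in $T$, each of cutwidth $\ge\dcw{C_{i^*}}$ (using the non-decreasing ordering of the $C_i$ together with $w_2\ge\dcw{C_i}$); applying the peeling claim to the subtree of $T$ spanned by $r_1$ and these subtrees, and using (ii), gives $\dcw{T}\ge\dcw{C_{i^*}}+(\ell-i^*+2)-1=w_1+1$. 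Together with the upper bound, this yields $\dcw{T}=\max(w_1+1,w_2)$ with $S_1S_2$ optimal.

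The main obstacle I anticipate is the lower bound $\dcw{T}\ge w_1+1$, specifically the peeling claim and the bookkeeping that makes its hypotheses apply. The ``$+1$'' genuinely relies on $w_2\ge\dcw{C_i}$ — without it a low-cutwidth extra child can be absorbed ``for free'' and the equality fails — so the delicate points are verifying that $T_2$ really counts among the high-cutwidth children of $r_1$ in $T$, and that $\dcw{C_{i^*}}\ge1$ holds in the case $w_1>\ell$, so that the peeling claim is actually invoked with $c\ge1$.
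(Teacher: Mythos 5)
Your proof is correct, and it reaches the conclusion by a genuinely different route than the paper. The paper argues by contradiction: it fixes an optimal serialization $S$ of $T$ with $\dcw{S}=w_1$, counts the number $k$ of child trees $C_j$ whose last vertex lies after that of $T_2$ in $S$, derives $\dcw{T}\geq w_2+k$, and then (in the case $w_1^s=\dcw{C_m}+\ell-m$) repeatedly exhibits high-cutwidth children that must end before $T_2$ and hence pay for ever more overarching edges, until the number of high children would have to be unbounded (\enquote{continuing this argument\ldots}). You replace that open-ended bootstrapping by a direct lower bound through a standalone \emph{peeling claim} (a root with $k$ child subtrees of cutwidth at least $c\geq 1$ forces cutwidth at least $c+k-1$), proved by an explicit induction that removes the latest-ending child. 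The underlying combinatorial mechanism is the same in both proofs --- an early-finishing subtree must, at its widest gap, cut its own $\dcw{C_i}$ edges plus one overarching edge per later-finishing sibling --- but your packaging buys several things: the argument is direct rather than by contradiction, the informal \enquote{arbitrarily large} iteration becomes a clean induction, and the peeling claim immediately yields the exact identity $\dcw{T_1}=\max(\{\ell\}\cup\{\dcw{C_i}+\ell-i\mid 1\leq i\leq\ell\})$, which the paper only establishes for the standard serialization and recovers for $\dcw{T_1}$ a posteriori via Theorem~\ref{theo_dcw_computation_correct}. The two delicate points you flag are exactly the right ones and both check out: $T_2$ counts among the children of cutwidth at least $\dcw{C_{i^*}}$ precisely because of the hypothesis $w_2\geq\dcw{C_i}$, and in the case $w_1>\ell$ one indeed has $\dcw{C_{i^*}}>i^*\geq 1$, so the claim is invoked with $c\geq 1$. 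The only bookkeeping worth adding is that children with $\dcw{C_i}=0$ contribute terms $\ell-i<\ell$ that are dominated by the out-degree bound, so the peeling claim never needs to be applied with $c=0$ when establishing the formula for $w_1$.
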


Now we can perform an easy induction over the structure of $T$ to show that 
the recursive computation of $\dcw{S(T)}$ produces $\dcw{T}$.
The induction base are single-node trees (leafs), and the step is Lemma~\ref{lemma_dcw_induction_step}.

\begin{theorem}\label{theo_dcw_computation_correct}
For trees $T$, $\dcw{T}=\dcw{S(T)}$, which can thus be computed in polynomial time.
\end{theorem}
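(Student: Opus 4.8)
The plan is to prove the identity $\dcw{T}=\dcw{S(T)}$ by structural induction on the tree $T$, using Lemma~\ref{lemma_dcw_induction_step} as the engine for the inductive step. The base case is a single-node tree $T=\tuple{\{v\},\emptyset\}$, for which $S(T)=v$ is the only serialization and $\dcw{S(T)}=0=\dcw{T}$, matching the formula vacuously.

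**Inductive step.** For a tree $T$ with more than one vertex, let $r$ be its root with direct subtrees $C_1,\ldots,C_\ell$ ordered so that $\dcw{S(C_1)}\leq\cdots\leq\dcw{S(C_\ell)}$, as in Definition~\ref{def_std_serial}. By the induction hypothesis, $\dcw{C_i}=\dcw{S(C_i)}$ for each $i$, so the $C_i$ are in fact ordered by non-decreasing true cutwidth. Now I would build $T$ incrementally: start from $T^{(0)}$, the single-node tree consisting of just $r$ (cutwidth $0$, optimal serialization $r$), and for $k=1,\ldots,\ell$ form $T^{(k)}$ by attaching $C_k$ as an additional direct child of $r$ in $T^{(k-1)}$. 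At each step, the hypotheses of Lemma~\ref{lemma_dcw_induction_step} are satisfied: $T^{(k-1)}$ plays the role of $T_1$ with its optimal serialization $S(T^{(k-1)})=r\,S(C_1)\cdots S(C_{k-1})$ (optimal by the previous iteration), and $C_k$ plays the role of $T_2$; the condition $w_2=\dcw{C_k}\geq\dcw{C_i}$ for all $i<k$ holds precisely because the children are sorted by non-decreasing cutwidth. The lemma then gives that $T^{(k)}$ has cutwidth $\max(\dcw{T^{(k-1)}}+1,\dcw{C_k})$ with optimal serialization $S(T^{(k-1)})\,S(C_k)=r\,S(C_1)\cdots S(C_k)$. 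After $\ell$ iterations we obtain $T^{(\ell)}=T$ with optimal serialization exactly $S(T)=r\,S(C_1)\cdots S(C_\ell)$, and hence $\dcw{T}=\dcw{S(T)}$.

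**Polynomial-time claim.** It then remains to observe that $S(T)$ and $\dcw{S(T)}$ can be computed in polynomial time. Using the closed formula noted just before the lemma, $\dcw{S(T)}=\max(\{\ell\}\cup\{\dcw{S(C_i)}+\ell-i\mid 1\leq i\leq\ell\})$, the computation proceeds bottom-up: for each vertex, collect the already-computed cutwidths of its child subtrees, sort them (which also fixes the order of the $C_i$), and evaluate the maximum. This visits each vertex once and does a sort of its children; the total work is $O(|V|\log|V|)$, or more crudely polynomial in the tree size, which suffices.

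**Main obstacle.** There is essentially no obstacle left at this level: the entire combinatorial difficulty — showing that greedily appending child serializations in non-decreasing-cutwidth order is optimal, and that the cost of an attachment is exactly $\max(w_1+1,w_2)$ — has been isolated into Lemma~\ref{lemma_dcw_induction_step}. The only points requiring care in the write-up are (i) verifying that the induction hypothesis genuinely licenses reordering the $C_i$ by \emph{true} cutwidth rather than merely by $\dcw{S(C_i)}$ (immediate, since those coincide by the IH), and (ii) checking that at each incremental attachment the serialization handed to the lemma is actually optimal for the partial tree $T^{(k-1)}$ — which is exactly what the lemma outputs at the previous step, so the induction closes cleanly.
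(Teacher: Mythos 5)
Your proposal is correct and follows exactly the paper's intended argument: structural induction with single-node trees as the base case, attaching the child subtrees $C_1,\ldots,C_\ell$ one at a time in non-decreasing order of cutwidth, and invoking Lemma~\ref{lemma_dcw_induction_step} at each attachment. The paper only sketches this in two sentences, so your more explicit write-up (including the check that the lemma's hypotheses hold at each partial tree $T^{(k)}$) is a faithful elaboration of the same proof.
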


\section{Implementation}
\label{sec:implementation}

We now describe the encoding of the calculi into existential rules with stratified negation in \Nemo syntax, and the subsequent translation of rule-based reasoning traces into DL proofs.
We have implemented this in the DL explanation library \Evee.\footnote{See \url{https://github.com/de-tu-dresden-inf-lat/evee/tree/nemo-extractor/evee-nemo-proof-extractor}}

\subsection{Existential Rules}
\label{sec:nemo-rules}

The implementation of the \Elk calculus in existential rules with stratified negation is split into three stages~\cite{DBLP:conf/kr/Ivliev0MSK24}.
The first stage consists of RDF import and normalization.
Here, the input OWL ontology is translated into facts over several predicates with the prefix \texttt{nf}.
These facts serve as the input for the calculus, \eg
\texttt{nf:subClassOf(A,B)}
for $A\sqsubseteq B\in \Tmc$.\footnote{More precisely, concept names are identified by IRIs, \eg \texttt{<http://example.org/iriA>}, but we omit them here.}
The normalization takes advantage of the OWL RDF encoding, which already contains a blank node for each complex concept, which can be used as an auxiliary concept name identifying that concept.
For example \texttt{nf:exists(?C,?R,?D)} states that \texttt{?C} is an auxiliary concept name representing the existential restriction with role \texttt{?R} and filler concept \texttt{?D}.
All such auxiliary concept names are marked by the predicate $\texttt{auxClass}$, which allows identifying the named concepts occurring in the input ontology using the rule
\[
\texttt{nf:isMainClass(?X) :- class(?X), $\sim$auxClass(?X) .}
\]
Auxiliary role names are treated similarly.
Since there can be multiple blank nodes denoting the same complex concept, there are additional existential rules that create unique representatives for these concepts, in order to avoid redundant computations.
Additionally, the normalization stage precomputes the role hierarchy $\sqsubseteq_\Tmc^*$ in the predicate \texttt{nf:subProp}.

The second stage consists of the actual inference rules of the \Elk calculus, which derive additional facts over predicates with the prefix \texttt{inf}.
Due to the normalization, the implementation of these rules is straightforward and close to the original calculus, for example for the rule $\mathsf{R}_\sqsubseteq$:
\[
\texttt{inf:subClassOf(?C,?E) :- inf:subClassOf(?C,?D), nf:subClassOf(?D,?E) .}
\]

In the third stage, the interesting entailments, namely subsumptions between concept names, are extracted with the help of the \texttt{inf:subClassOf} and \texttt{nf:isMainClass} predicates.

\mypar{Modifications.}
We slightly adapted these rules for our purposes.
All modifications are confined to the normalization and extraction; the inference rules of the calculus are not affected.

To the normalization stage, we added an inference rule that derives $\bot \sqsubseteq A$  for each concept name $A$ in the input, which ensures that $C \sqsubseteq A$ is inferred for any concept $C$ with $\Tmc \models C \sqsubseteq \bot$. 
Such reasoning with the $\bot$-concept is handled outside the calculus of \citet{DBLP:journals/jar/KazakovKS14}, we explicitly included it to obtain a complete reasoner for classification.

Next, we updated the precomputation of the role hierarchy, which is recursively computed in a top-down manner, starting with the largest role in a set of connected role inclusions, rather than in a bottom-up fashion (as in the original rules).
\begin{verbatim}
    directSubProp(?R,?S) :- TRIPLE(?R,rdfs:subPropertyOf,?S) .
    nf:subProp(?R,?R) :- nf:exists(?C,?R,?D), nf:isSubClass(?C) .
    nf:subProp(?R,?T) :- directSubProp(?R,?S), nf:subProp(?S,?T) .
\end{verbatim}
For example, starting from $r \sqsubseteq s$, $s \sqsubseteq t$, $t \sqsubseteq u$, we first derive $s \sqsubseteq u$ and then $r \sqsubseteq u$. This modification is necessary, because the original implementation did not compute the role hierarchy correctly. It started from roles~$r$ occurring negatively in~\Tmc (using \texttt{nf:isSubClass} as above), but only computed the role hierarchy \emph{above}~$r$, the opposite of what is needed by~$\mathsf{R}_\exists^+$.

We also added support for transitivity and domain axioms by introducing two rules that translate $\mathsf{domain}(r,C)$ into $\exists r.\top \sqsubseteq C$ and $\mathsf{trans}(r)$ into $r \circ r \sqsubseteq r$.
Moreover, we support the current OWL~2 encoding of role chains, \ie \texttt{owl:propertyChainAxiom}, including role chains with more than two role names.
Finally, we added the possibility to prove equivalence axioms $C \equiv D$ directly from $C \sqsubseteq D$ and $D \sqsubseteq C$.

\mypar{Other Calculi.}
The same normalization is also utilized in the implementation of the \Textbook and \Envelope calculi, but for those we added the additional normalization predicates \texttt{nf:subClassConj(?C,?D,?E)} for $C\sqcap D\sqsubseteq E$, \texttt{nf:subClassEx(?R,?C,?D)} for $\exists r.C\sqsubseteq D$, and \texttt{nf:supClassEx(?C,?R,?D)} for $C\sqsubseteq\exists r.D$, according to the normal forms expected by these calculi.
The main inference rules of both calculi are implemented as expected, for example
\begin{align*}
\texttt{in}&\texttt{f:subClassOf(?C,?F) :- inf:supClassEx(?C,?R,?D),} \\
&\texttt{inf:subClassOf(?D,?E), nf:subProp(?R,?S), inf:subClassEx(?S,?E,?F) .}
\end{align*}
for $\mathsf{CR5'}$ from the \Textbook calculus, and
\[
\texttt{R(?C,?R,?E) :- S(?C,?D), nf:supClassEx(?D,?R,?E)}
\]
for CR3 from the \Envelope calculus, where we used the original names $R$ and $S$~\cite{DBLP:conf/ijcai/BaaderBL05}.

\subsection{Translation to OWL Proofs}

\Nemo allows us to obtain a \emph{trace} of how a particular entailment, \eg \texttt{inf:subClassOf(A,B)}, was obtained using the rules.
Such a trace is similar to a DL proof, but the vertices are labeled with ground facts instead of DL axioms.
In the Java class \texttt{NemoProofParser}, we implemented a translation from such traces into valid DL proofs, which reads a Nemo trace in JSON format into an \texttt{IProof<String>} object, where each \texttt{String} represents a fact, and outputs an \texttt{IProof<OWLAxiom>} object\footnote{\texttt{OWLAxiom} is part of the Java OWL API~\cite{DBLP:journals/semweb/HorridgeB11}.} that can be further processed or again saved in JSON format.

The abstract class \texttt{AbstractAtomParser}, instantiated
for each of the three calculi,
translates facts into DL axioms,
\eg \texttt{inf:subClassOf(A,B)} into $A\sqsubseteq B$ and \texttt{nf:supClassEx(A,r,B)} into $A\sqsubseteq\exists r.B$.
Any fact that does not have a corresponding DL axiom, \eg \texttt{inf:init(A)}, is translated into $\bot \sqsubseteq \top$, which indicates that the atom is skipped and will not appear in the final proof (see Section~\ref{sec:preliminaries}).
In particular, traces contain many normalization steps that use various auxiliary predicates, which do not show up in the final DL proof.

Auxiliary concept names that are represented by blank nodes (either from the RDF encoding or from existential rules) need special treatment, since they would not make sense in the translated proof.
Instead, we replace them by the concepts they denote.
Since a fact containing a blank node, such as \texttt{inf:subClassOf(A,\_:61)}, does not contain information about the complex concept the blank node identifies, we need to collect other relevant facts from the trace.
For example, \texttt{nf:exists(\_:61,t,D)} encodes that \texttt{\_:16} stands for $\exists t.D$.
If \texttt{D} is also a blank node, then we have to recursively consider more facts until we can construct the final concept.

The only place where this approach does not work is for complex role inclusions with auxiliary role names.
For example, if the role inclusion $r\circ s\circ t \sqsubseteq u$ is normalized into \texttt{nf:subPropChain(r,s,\_:5)} and \texttt{nf:subPropChain(\_:5,t,u)}, we cannot simply replace \texttt{\_:5} by $r\circ s$, since that would result in $r\circ s\sqsubseteq r\circ s$, which is not expressible in OWL due to the role composition on the right-hand side.
However, this axiom is a tautology, and thus we can omit it without affecting the correctness of the inference steps.
During reasoning, auxiliary roles can also appear in existential restrictions, \eg in \texttt{inf:supClassEx(C,\_:5,D)}, which we translate into two (or more) nested existential restrictions: $C\sqsubseteq\exists r.\exists s.D$.

Lastly, after translating the inference steps in the described manner, we minimize the size of the resulting OWL proof using the \texttt{MinimalProofExtractor} class, which eliminates redundant inferences and unnecessary tautologies~\cite{DBLP:conf/cade/AlrabbaaBBKK21}.
In particular, this removes all occurrences of $\bot \sqsubseteq \top$.
As a result, we obtain a correct and complete DL proof from the \Nemo reasoning trace.

\section{Evaluation}
\label{sec:evaluation}

We compared the proofs resulting from the three calculi on a benchmark of 1,573 reasoning tasks that were extracted from the ORE 2015 ontologies~\cite{DBLP:journals/jar/ParsiaMGGS17,DBLP:conf/lpar/AlrabbaaBBKK20}.
Each reasoning task consists of a justification and the entailed axiom, which avoids the overhead of having to deal with a large ontology and lets us focus on the structure of the inference steps used to prove the entailments.
The benchmark covers all types of entailments that hold in the ORE ontologies, modulo renaming of concept and role names (and some timeouts).
The resulting proofs and measurements can be found on Zenodo~\cite{alrabbaa_2025_16320822}, and in the appendix there are detailed graphs for pairwise comparisons of the calculi.

From the structure of the calculi, we expected proofs of the \Textbook calculus to be less linear and shallower, \ie have larger directed cutwidth, larger bushiness score and smaller depth, because it does not restrict any premises of the inference rules to be from the input TBox (see also Figure~\ref{fig:shapes}), and therefore allows for more balanced proof trees.
This was confirmed in the experiments, with the directed cutwidth of \Textbook proofs being higher in 1,486 and 1,381 cases compared to \Elk and \Envelope, respectively, and never lower.
Similarly, the bushiness score is higher for 1,534 and 1,512 proofs, and lower in only 12 and 23 cases, respectively.
Conversely, the depth is lower for 1,503 proofs, and higher in only 8 cases, compared to both other calculi.
We conjecture that the depth of the \Textbook proofs is approximately logarithmic in the depth of the corresponding \Elk or \Envelope proofs, and that the relationship for directed cutwidth and bushiness score is exponential.
On the same three measures, the \Elk and \Envelope proofs behave very similarly, with \Envelope proofs having slightly higher directed cutwidth and bushiness score, but nearly identical depth, compared to the \Elk proofs.

We also compared directed cutwidth and bushiness score, since they both try to measure how linear proofs are. We observe that there is indeed a correlation between them; however, whereas directed cutwidth is always a natural number, bushiness score allows a more fine-grained comparison of proofs, and also tends to increase faster than the directed cutwidth.

For the remaining measures of size and average step complexity, the \Elk calculus is the outlier, with both lower size (in 1,025 and 584 cases compared to \Envelope and \Textbook, 
respectively) and lower average step complexity (1,281 and 1,062 proofs, respectively).
Here, \Envelope and \Textbook obtain very similar results, with slightly lower values for \Textbook.

\blue{There is no clear winner across all the measures we considered. However, depending on specific use cases, proofs generated using certain calculi may be more 
preferable.
	Overall, proofs generated with the \Elk calculus seem to be the better choice for providing explanations of entailments, since they are smaller and rely on simpler inferences.
	Additionally, \Elk proofs have lower cutwidth and bushiness scores, indicating that axioms are used as soon as possible in the proof, which can be an advantage when reading the proofs 
	in a linear format (see Figure~\ref{fig:shapes}).
  Conversely, the low depth of \Textbook proofs may be a better option for other types of visualizations, since it can reduce the amount of vertical or horizontal scrolling required, which also allows users to inspect the proof in a more linear manner.
	Our experiment do not show a specific advantage for proofs generated using the \Envelope calculus over the other two.
}

\subsection{Limitations}

The benchmarks are not fully representative of general \ELpb proofs, mainly for two reasons.
First, the benchmark does not contain all possible justifications and entailments from the ORE 2015 ontologies, since for some cases it was too costly to compute all justifications~\cite{DBLP:conf/lpar/AlrabbaaBBKK20}.
Second, \Nemo always computes only one trace, even if there are different combinations of inference steps that result in the same axiom.
Since this is done by a specific algorithm in \Nemo, there is a systematic bias in the resulting proofs.
For example, for the \Textbook calculus, instead of very bushy proofs, \Nemo could also have returned more linear proofs, as for the other calculi.
In this case, \Nemo traces seem to be biased towards smaller depth, which allows us to observe the differences between the calculi and confirms our initial intuitions.
\blue{In future work, we plan to reevaluate this once \Nemo is extended to consider all possible traces rather than just one.}%

\section{Conclusion}
\label{sec:conclusion}

We compared the proofs obtained from three reasoning calculi for the \EL family of DLs.
This was facilitated by \Nemo, a powerful rule engine that allowed us to quickly implement the calculi without having to develop a dedicated reasoner.
As expected, the \Textbook calculus~\cite{BHLS-17} indeed produces more bushy and more shallow proofs, and it turns out that the \Elk calculus~\cite{DBLP:journals/jar/KazakovKS14} generally yields smaller proofs whose inference steps are on average less complex~\cite{DBLP:journals/kbs/HorridgeBPS13}.
This enables us to choose specific calculi for different purposes, \eg to show proofs in a visualization format where the screen space is restricted either horizontally or vertically, or when the goal is to be able to understand individual inference steps more quickly.
In the future, we want to apply this method to consequence-based calculi for more expressive logics, \eg \ALC and beyond.

\begin{acknowledgments}
  This work was supported by DFG in grant 389792660 (TRR~248: \href{https://perspicuous-computing.science}{CPEC}, \href{https://perspicuous-computing.science}{https://perspicuous-computing.science}),
  and by BMFTR and DAAD in project 57616814 (\href{https://secai.org/}{SECAI}, \href{https://secai.org/}{School of Embedded and Composite AI}).
\end{acknowledgments}

\pagebreak
\appendix
 
\section{Proof of Lemma~\ref{lemma_dcw_induction_step}}

\LemDcwInductionStep*
\begin{proof}
We observe that $\dcw{S_1 S_2}=\max(w_1+1, w_2)$, so it remains to show that this is optimal.
Suppose for a contradiction that $\dcw{S}=\dcw{T}<\max(w_1+1, w_2)$ for some serialization $S$.
Since $T$ contains all edges of either $T_i$, $\dcw{T}\geq w_1$ and $\dcw{T}\geq w_2$,
so $\dcw{T}=w_1$.

For $1\leq i\leq\ell$, let $c_i=\dcw{C_i}$, let $S^c_i$ be a serialization with $c_i=\dcw{S^c_i}$,
and let $\opfont{end}_i$ be the maximal position of a vertex in $C_i$ in $S$.
Likewise, let $\opfont{end}$ be the maximal position of a vertex of $T_2$ in $S$.
Now $k=|\{j\mid 1\leq j\leq \ell, \opfont{end} < \opfont{end}_j \}|$ is the number of child trees $C_j$ 
ending after $T_2$ in $S$.
We find $\dcw{T}\geq w_2+k$, since every gap in $S$ from $1$ to $\opfont{end}_j-1$ cuts at least one edge 
of the subtree $r_1\to C_j$ of $T$, and the sub-sequence for $T_2$ in~$S$ must have a gap that cuts $w_2$ edges of $T_2$.

Now consider the serialization $S_1'=r_1 S^c_1\cdots S^c_\ell$. $S_1'$ is analogous to the standard serialization
of $T_1$ but using arbitrary optimal (possibly non-standard) sub-serializations for $C_i$,
and, analogously, we find $\dcw{S_1'}=\max(\{\ell\}\cup\{\dcw{C_i}+\ell-i\mid 1\leq i\leq \ell\})$.
We abbreviate $w_1^s=\dcw{S_1'}$, and note that $w_1^s\geq w_1=\dcw{T}$.

Case (A): If $w_1^s=\ell$, since $w_1\geq\ell$ and $\dcw{T}\geq\ell+1$ (cutwidth $\geq$ out-degree), we get $w_1=\ell$ and $\dcw{T}\geq w_1+1$, contradicting $\dcw{T}=w_1$.

Case (B): There is $m\in\{1,\ldots,\ell\}$ such that $w_1^s=\dcw{C_m}+\ell-m$.
By the ordering, we have $\dcw{C_j}\geq\dcw{C_m}$ for all $m\leq j\leq\ell$;
the number of these \emph{high children} is $h=\ell-m+1$.
Since $w_1^s\geq w_1$ and $w_1=\dcw{T}\geq w_2+k$, we get $\dcw{C_m}+\ell-m\geq w_2+k$.
But $w_2\geq\dcw{C_m}$ by the precondition of the lemma, hence $w_2+\ell-m\geq w_2+k$,
and so $\ell-m\geq k$, i.e., $h\geq k+1$.

Therefore, $h>k$, i.e., at least one high child $C_i$ in $T_1$ has all of its vertices before position $\opfont{end}$ in $S$.
Thus, there is some gap in the range of $C_i$ in~$S$ that cuts $\dcw{C_i}$ edges from $C_i$, (at least) $k$ edges from subtrees $r_1\to C_j$ with $\opfont{end}_j>\opfont{end}>\opfont{end}_i$, and (at least) one edge from $r_1\to T_2$, so in total $\dcw{C_i}+k+1$ edges.
Hence, $\dcw{T}\geq \dcw{C_i}+k+1\geq \dcw{C_m}+k+1$ since $C_i$ is a high child.
But then $\dcw{C_m}+k+1\leq \dcw{T}= w_1\leq w_1^s = \dcw{C_m}+\ell-m$, therefore $k+1\leq \ell-m$, i.e., $k+2\leq h$.

By analogous reasoning, we find two high children $C_i$ and $C_j$ in $T_1$ both having all of their vertices before $\opfont{end}$ in $S$.
Without loss of generality, assume that $C_i$ has all of its vertices before $\opfont{end}_j$. Then $C_i$ has at least $k+2$ overarching edges
belonging to later child trees and $T_2$, so $\dcw{T}\geq \dcw{C_m}+k+2$. By the same steps as above, we find $k+3\leq h$.
Continuing this argument, we can show that $h-k$ must be arbitrarily large, which cannot be (in fact, $k,h\leq \ell$). Contradiction.
\end{proof}

\section{Graphs for the Evaluation}

These graphs compare each pair of calculi on each of the measures (Figures~\ref{fig:cutwidth}--\ref{fig:tree-size}), and compare directed cutwidth and bushiness score for each of the calculi (Figure~\ref{fig:bushinessToCutwidth}).
Each red circle corresponds to one of the 1,573 reasoning tasks from the benchmark~\cite{DBLP:conf/lpar/AlrabbaaBBKK20}.

\begin{figure}[ht]
	\centering
	\includegraphics[scale=1.3]{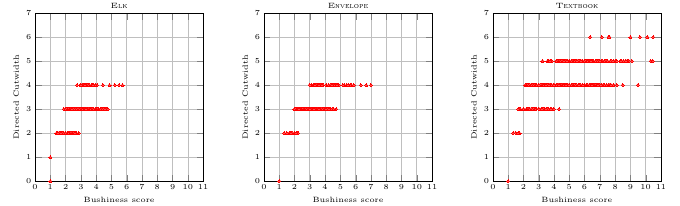}
	\caption{Relation between bushiness score and directed cutwidth}
	\label{fig:bushinessToCutwidth}
\end{figure}

\begin{figure}[ht]
	\centering
	\includegraphics[scale=1.3]{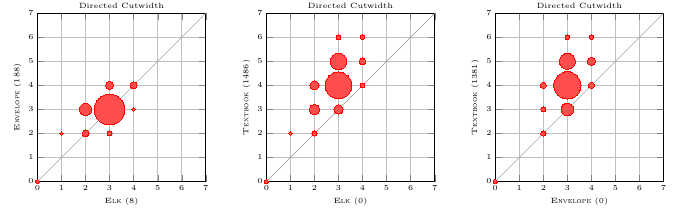}
	\caption{Directed cutwidth. Circle area is scaled proportionally to the number of corresponding proofs.}
	\label{fig:cutwidth}
\end{figure}

\begin{figure}[ht]
	\centering
	\includegraphics[scale=1.3]{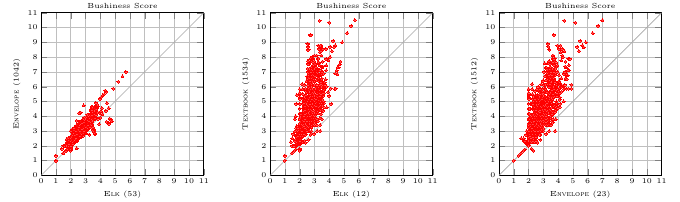}
	\caption{Bushiness score}
	\label{fig:bushiness}
\end{figure}

\begin{figure}[ht]
	\centering
	\includegraphics[scale=1.3]{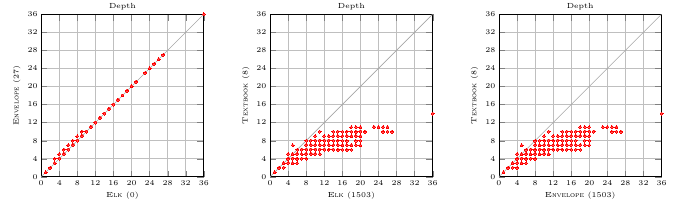}
	\caption{Depth}
	\label{fig:depth}
\end{figure}

\begin{figure}[ht]
	\centering
	\includegraphics[scale=1.3]{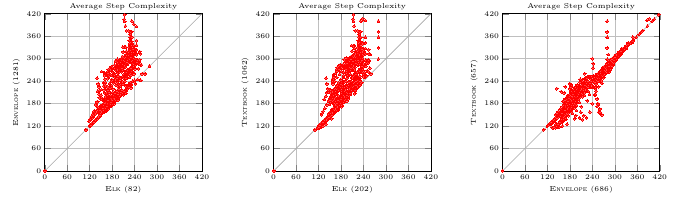}
	\caption{Average step complexity}
	\label{fig:avg}
\end{figure}

\begin{figure}[ht]
	\centering
	\includegraphics[scale=1.3]{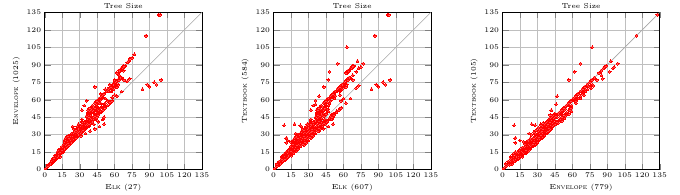}
	\caption{Tree Size}
	\label{fig:tree-size}
\end{figure}

\vfill
~

\end{document}